\documentclass{llncs}

\usepackage{makeidx}  
\usepackage{color}
\usepackage{graphicx}
\usepackage[nounderscore]{syntax}
\usepackage{courier}
\usepackage{hyperref}
\usepackage{amsmath}
\usepackage{float}
\usepackage{url}
\usepackage{bbold}
\usepackage{dsfont}
\usepackage{cite}
\usepackage[labelfont=bf, labelsep=period]{caption}

\newfloat{requirement}{h}{req}
\floatname{requirement}{Req.}

\begin{document}
%
%
\pagestyle{plain}  
%

\title{On Implementing Real-time Specification Patterns Using Observers\thanks{This work was sponsored by DARPA/AFRL Contract FA8750-12-9-0179, AFRL Contract FA8750-16-C-0018, and NASA Contract NNA13AA21C}}
\author{John D. Backes\inst{1}, Michael W. Whalen\inst{2}, Andrew Gacek\inst{1}, John Komp\inst{2}}
\institute{Rockwell Collins, Bloomington MN 55438 \and University of Minnesota, Minneapolis MN 55455}
\maketitle
\begin{abstract}
English language requirements are often used to specify the behavior of complex cyber-physical systems.  The process of transforming these requirements to a formal specification language is often challenging, especially if the specification language does not contain constructs analogous to those used in the original requirements. For example, requirements often contain real-time constraints, but many specification languages for model checkers have discrete time semantics.  Work in specification patterns helps to bridge these gaps, allowing straightforward expression of common requirements patterns in formal languages.  In this work we demonstrate how we support real-time specification patterns in the Assume Guarantee Reasoning Environment (AGREE) using observers.  We demonstrate that there are subtle challenges, not mentioned in previous literature, to express real-time patterns accurately using observers.  We then demonstrate that these patterns are sufficient to model real-time requirements for a real-world avionics system.
\end{abstract}

\section{Introduction}

Natural language requirements specifications are often used to
prescribe the behavior of complex cyber-physical systems.
Regrettably, such specifications can be incomplete, inconsistent, or
ambiguous. For these reasons, researchers have long advocated the use
of formal languages, such as temporal logics to describe requirements.
Unfortunately, the process of formalizing natural language
requirements using formal specification languages is often
challenging, especially if the specification language does not contain
constructs analogous to those used in the original requirements.

Specification patterns~\cite{dwyer1999patterns,konrad2005real} are an approach to ease the construction of formal specifications from natural language requirements.  These patterns describe how common reasoning patterns in English language requirements can be represented in (sometimes complex) formulas in a variety of formalisms.  Following the seminal work of Dwyer~\cite{dwyer1999patterns} for discrete time specification patterns, a variety of real-time specification pattern taxonomies have been developed~\cite{konrad2005real,gruhn2006patterns,bellini2009expressing,Reinkemeir12,etzien2013contracts}.  An example of a timed specification pattern expressible in each is: ``Globally, it is always the case that if $P$ holds, then $S$ holds between {\em low} and {\em high} time unit(s).''

In most of this work, the specification patterns are mapped to real-time temporal logics, such as TCTL~\cite{alur1991techniques}, MTL~\cite{koymans1990specifying},  RTGIL~\cite{moser1997graphical}, and TILCO-X~\cite{bellini2009expressing}.
As an alternative, researchers have investigated using {\em observers} to capture real-time specification patterns.  Observers are code/model fragments written in the modeling or implementation language to be verified, such as timed automata, timed Petri nets, source code, and Simulink, among others.  For example, Gruhn~\cite{gruhn2006patterns} and Abid~\cite{abid2012real} describe real-time specifications as state machines in timed automata and timed Petri nets, respectively.  A benefit of this approach is that rather than checking complex timed temporal logic properties (which can be very expensive and may not be supported by a wide variety of analysis tools), it is possible to check simpler properties over the observer.

Despite this benefit, capturing real-time specification patterns with
observers can be challenging, especially in the presence of
overlapping ``trigger events.'' That is, if $P$ occurs multiple
times before {\em low} time units have elapsed in the example above.
For example, most of the observers in Abid~\cite{abid2012real}
explicitly are not defined for `global' scopes, and Gruhn, while
stating that global properties are supported, only checks a pattern
for the first occurrence of the triggering event in an infinite trace.

In this work, we examine the use of observers and invariant properties
to capture specification patterns that can involve overlapping
triggering events. We use the Lustre specification
language~\cite{lustre} to describe {\em synchronous observers}
involving a real-valued time input to represent the current system
clock\footnote{Although our formalisms are expressed as Lustre
  specifications, the concepts and proofs presented in this paper are
  applicable to many other popular model checking specification
  languages.}. We describe the conditions under which we can use
observers to faithfully represent the semantics of patterns, for both
positive instances of patterns {\em and negations of patterns}. We
call the former use {\em properties} and the latter use {\em
  constraints}.
  
The reason that we consider negations of patterns is that our overall goal is to use real-time specification patterns in the service of assume/guarantee compositional reasoning.  In recent efforts~\cite{Backes15,Murugesan14}, we have used the AGREE tool suite~\cite{CoferNFM2012} for reasoning about discrete time behavioral properties of complex models described in the Architectural Analysis and Design Language~\cite{aadl}\footnote{AGREE is available at: http://loonwerks.com}.  Through adding support for Requirements Specification Language (RSL) patterns~\cite{CESAR} and calendar automata~\cite{Dutertre04,Pike05,Vanwyk07slap}, it becomes possible to lift our analysis to real-time systems.  In AGREE, we prove implicative properties: given that subcomponents satisfy their contracts, then a system should satisfy its contract.  This means that the RSL patterns for subsystems are used under a negation.  We describe the use of these patterns in AGREE and demonstrate their use on a real avionics system.  Thus, the contributions of this work are as follows:

\begin{itemize}
\item We demonstrate a method for translating RSL Patterns into Lustre observers and system invariants.
\item We prove that it is possible to efficiently capture patterns involving arbitrary overlapping intervals in Lustre using non-determinism.
\item We argue that there is no method to efficiently encode a transition system in Lustre that implements the exact semantics of all of the RSL patterns when considering their negation.
\item We demonstrate how to encode these patterns as Lustre constraints for \textit{practical} systems.
\item We discuss the use of these patterns to model a real-world avionics system.
\end{itemize}


\section{Definitions}\label{sec:definitions}

AGREE proves properties of architectural models compositionally by
proving a series of lemmas about components at different levels in the
model's hierarchy. A description of how these proofs are constructed
is provided in~\cite{CoferNFM2012,Backes15} and a proof sketch of
correctness of these rules is described
in~\cite{AgreeUsersGuide,CoferNFM2012}. For the purpose of this work,
it is not important that the reader has an understanding of how these
proofs are constructed. The AGREE tool translates AADL models
annotated with component assumptions, guarantees, and assertions into
Lustre programs. Our explanations and formalizations in this paper are
described by these target Lustre specifications. Most other SMT-based
model checkers use a specification language that has similar
expressivity as Lustre; the techniques we present in this paper can be
applied generally to other model checking specification languages.

A Lustre program $\mathcal{M} = (V, T, P)$ can be thought of as a
finite collection of named variables $V$, a transition relation $T$,
and a finite collection of properties $P$. Each named variable is of
type $bool$, $integer$, or $real$. The transition relation is a
Boolean constraint over these variables and theory constants; the
value of these variables represents the program's current
\textit{state}, and the transition relation constrains how the state
changes. Each property $p \in P$ is also a Boolean constraint over the
variables and theory constants. We sometimes refer to a Lustre program
as a model, specification, or transition system. The AGREE constraints
specified via assumptions, assertions, or guarantees in an AADL model
are translated to either constraints in the transition relation or
properties of the Lustre program.

The expression for $T$ contains common arithmetic and logical
operations ($+$, $-$, $*$, $\div$, $\vee$, $\wedge$, $\Rightarrow$,
$\neg$, $=$) as well as the ``if-then-else'' expression ($ite$) and
two temporal operations: $\rightarrow$ and $pre$. The $\rightarrow$
operation evaluates to its left hand side value when the program is in
its initial state. Otherwise it evaluates to its right hand side
value. For example, the expression: $true \rightarrow false$ is $true$
in the initial state and $false$ otherwise. The $pre$ operation takes
a single expression as an argument and returns the value of this
expression in the previous state of the transition system. For
example, the expression: $x = (0 \rightarrow pre (x) + 1)$ constrains
the current value of variable $x$ to be $0$ in the initial state
otherwise it is the value of $x$ in the previous state incremented by
$1$.

In the model's initial state the value of the $pre$ operation on any
expression is undefined. Every occurrence of a $pre$ operator must be
in a subexpression of the right hand side of the $\rightarrow$
operator. The $pre$ operation can be performed on expressions
containing other $pre$ operators, but there must be $\rightarrow$
operations between each occurrence of a $pre$ operation. For example,
the expression: $true \rightarrow pre(pre(x))$ is not well-formed, but
the expression: $true \rightarrow pre(x \rightarrow pre(x))$ is
well-formed.

A Lustre program models a state transition system. The current values
of the program's variables are constrained by values of the program's
variables in the previous state. In order to model timed systems, we
introduce a real-valued variable $t$ which represents how much time
has elapsed during the previous transitions of the system. We adopt a
similar model as \textit{timeout automata} as described
in~\cite{Dutertre04}. The system that is modeled has a collection of
\textit{timeouts} associated with the time of each ``interesting
event'' that will occur in the system. The current value of $t$ is
assigned to the least timeout of the system greater than the previous
elapsed time. Specifically, $t$ has the following constraint:
\begin{equation}\label{fml:calendar}
t = 0 \rightarrow pre(t) + min\_pos(t_1 - pre(t), \ldots, t_n - pre(t))
\end{equation}
where $t_1, \ldots, t_n$ are variables representing the timeout values
of the system. The function $min\_pos$ returns the value of its
minimum positive argument. We constrain all the timeouts of the system
to be positive. A timeout may also be assigned to positive infinity
($\infty$)\footnote{In practice, we allow a timeout to be a negative
  number to represent infinity. This maintains the correct semantics
  for the constraint for $t$ in Formula~\ref{fml:calendar}.}. There
should always be a timeout that is greater than the current time (and
less than $\infty$). If this is true, then the invariant $true
\rightarrow t > pre(t)$ holds for the model, i.e., time always
progresses.

A sequence of states is called a \textit{trace}. A trace is said to be
\textit{admissible} (w.r.t. a Lustre model or transition relation) if
each each state and its successor satisfy the transition relation. We
adopt the common notation $(\sigma,\tau)$ to represent a trace of a
timed system where $\sigma$ is a sequence of states ($\sigma =
\sigma_1\sigma_2\sigma_3\ldots$) and $\tau$ is a sequence of time
values ($\tau = \tau_1\tau_2\tau_3\ldots$) such that $\forall i :
\tau_i < \tau_{i+1}$. In some literature, state transitions may take
place without any time progress (i.e., $\forall i : \tau_i \leq
\tau_{i+1}$). We do not allow these transitions as it dramatically
increases the complexity of a model's Lustre encoding.

A Lustre program implicitly describes a set of admissible traces. Each
state $\sigma_n$ in the sequence represents the value of the variables
$V$ in state $n$. Each time value $\tau_n$ represents the value of the
time variable $t$ in state $n$. We use the notation $\sigma_n \models
e$, where $e$ is Lustre expression over the variables $V$ and theory
constants, if the expression $e$ is satisfied in the state $\sigma_n$.
Similarly, we use $\sigma_n \not\models e$ when $e$ is not satisfied
in the state $\sigma_n$. A property $p$ is true (or invariant) in a
model if and only if for every admissible trace $\forall n : \sigma_n
\models p$. For the purposes of this work, we only consider models
that do not admit so-called ``Zeno traces''~\cite{Gomez2007}. A trace
$(\sigma, \tau)$ is a Zeno trace if and only if $\exists v \forall i :
\tau_i < v$, i.e., time never progresses beyond a fixed point.


\section{Implementing RSL Patterns}\label{sec:patterns}

\subsection{Formalizing RSL Patterns Semantics}

For this work, we chose to target the natural language patterns proposed in the CESAR project because they are representative of many types of natural language requirements~\cite{CESAR}. These patterns are divided into a number of categories.  The categories of interest for this work are the \textit{functional patterns} and the \textit{timing patterns}. Some examples of the functional patterns are:
\begin{enumerate}
\item \textbf{Whenever} event \textbf{occurs} event \textbf{occurs} \textbf{during} interval
\item \textbf{Whenever} event \textbf{occurs} condition \textbf{holds} \textbf{during} interval
\item \textbf{When} condition \textbf{holds} \textbf{during} interval event \textbf{occurs} \textbf{during} interval
\item \textbf{Always} condition
\end{enumerate}
Some examples of timing patterns are:
\begin{enumerate}
\item Event \textbf{occurs each} period \textbf{[with jitter} jitter\textbf{]}
\item Event \textbf{occurs sporadic with IAT} interarrivaltime \textbf{[and jitter} jitter\textbf{]}
\end{enumerate}

Generally speaking, the timing patterns are used to constrain how often a system is required to respond to events. For instance, a component that listens to messages on a shared bus might assume that new messages arrive at most every 50ms. The second timing pattern listed above would be ideal to express this assumption. In AGREE, this requirement may appear as a system assumption using the pattern shown in Figure~\ref{fig:sporadic}.

\begin{figure}
\centering
\fbox{
\begin{minipage}{0.55\textwidth}
\textit{new\_message} \textbf{occurs sporadic with IAT} 50.0
\end{minipage}
}
\caption{An instance of a timing pattern to represent how frequently a message arrives on a shared bus.}
\label{fig:sporadic}
\end{figure}

The functional patterns can be used to describe how the system's state changes in response to external stimuli. Continuing with the previous example, suppose that the bus connected component performs some computation whenever a new message arrives. The functional patterns can be used to describe when a thread is scheduled to process this message and how long the thread takes to complete its computation. The intervals in these patterns have a specified lower and upper bound, and they may be open or closed.  The time specified by the lower and upper bound corresponds to the time that progresses since the triggering event occurs. Both the lower and upper bounds must be positive real numbers, and the upper bound must be greater than or equal to the lower bound. An AGREE user may specify the instances of patterns shown in Figure~\ref{fig:event-pattern} as properties she would like to prove about this system.  For the purposes of demonstration we assume that the thread should take 10ms to 20ms to execute.

\begin{figure}
\centering
\fbox{
\begin{minipage}{0.85\textwidth}\textbf{Always} \textit{new\_message = thread\_start} \\
\textbf{Whenever} \textit{thread\_start} \textbf{occurs} \textit{thread\_stop} \textbf{occurs during} \textbf{[}10.0\textbf{,} 20.0\textbf{]}
\end{minipage}
}
\caption{Two instances of a functional patterns used to describe when a thread begins executing, and how long it takes to execute.}
\label{fig:event-pattern}
\end{figure}

\begin{figure}[t]
\centering
\includegraphics[scale=1]{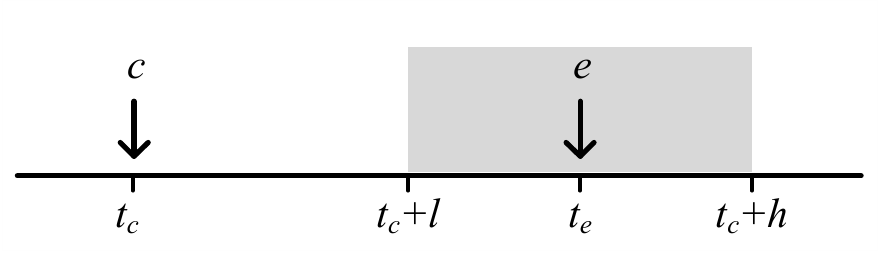}\\
\textbf{whenever} $c$ \textbf{occurs} $e$ \textbf{occurs during} \textbf{[}$l$\textbf{,} $h$\textbf{]}
\caption{A graphical representation for the RSL pattern}\label{fig:p1}
\end{figure}

Figure~\ref{fig:p1} shows a graphical representation of the first
functional pattern listed at the beginning of this section. The
variable $t_{c}$ represents the time that event $c$ occurs. Similarly,
the variable $t_{e}$ represents the time that event $e$ occurs. The
formal semantics for many of the RSL patterns are described
in~\cite{Reinkemeir12}. The semantics for the pattern described in
Figure~\ref{fig:p1} are represented by the set of admissible traces
$\mathcal{L}_{patt}$ described below.
\begin{equation*}
\mathcal{L}_{patt} = \{(\sigma,\tau)~|~\forall i \exists j :
  \sigma_i \models c \Rightarrow (j > i) \wedge (\tau_i + l \leq \tau_j \leq \tau_i + h) \wedge (\sigma_j \models e)\}
\end{equation*}
The remainder of this section discusses how the pattern in
Figure~\ref{fig:p1} can be translated into either a Lustre property or
a constraint on the admissible traces of a transition system
described by Lustre. Although we discuss only this pattern, the
techniques that we present can be applied generally to all except one of the functional and timing RSL patterns\footnote{The single pattern that cannot be implemented requires an independent event to occur for each of an unbounded number of causes. There are 12 functional and timing RSL patterns in total.}.


\subsection{Implementing RSL Patterns as Lustre Properties}
\label{sec:rsl-to-lustre}

One can determine if a transition system described in Lustre admits only traces in $\mathcal{L}_{patt}$ by adding additional constraints over fresh variables (variables that are not already present in the program) to the model. This commonly used technique is referred to as adding an \textit{observer} to the model. These constraints are over fresh variables: $run, timer, rec_{c}$ and $pass$; they are shown in Figure~\ref{fig:lustre-prop}.  The constraints only restrict the values of the fresh variables, therefore they do not restrict the traces admissible by the transition relation.
\begin{figure}
\centering
\fbox{
\begin{minipage}{.92\textwidth}
\begin{enumerate}
\item $run = (rec_{c} \rightarrow ite( pre(run) \wedge e \wedge l \leq timer \leq h,\\
\phantom{xxxxxxxxxxxxxxx} false, \\
\phantom{xxxxxxxxxxxxxxx} ite( rec_{c}, true, pre(run))))$
\item $timer = (0 \rightarrow ite( pre(run), pre(timer) + (t - pre(t)), 0))$
\item $rec_{c} \Rightarrow c$
\item $pass = (timer \leq h)$
\end{enumerate}
\end{minipage}
}
\caption{The constraints added to a transition relation to verify if only the traces of $\mathcal{L}_{patt}$ are admissible. The transition relation only admits traces of $\mathcal{L}_{patt}$ if and only if the variable $pass$ is invariant.}\label{fig:lustre-prop}
\end{figure}

The intuition behind these constraints is that one can record how much
time progresses since an occurrence of $c$. This time is recorded in
the $timer$ variable. The value of the timer variable only increases
if the previous value of the $run$ variable is true. The $run$
variable is true if an occurrence of $c$ is \textit{recorded} and no
occurrence of $e$ happens until after the timer counts to at least
$l$. The variable $rec_{c}$ non-deterministically records an
occurrence of $c$. If the transition system admits a trace outside of
$\mathcal{L}_{patt}$, then the $rec_{c}$ variable can \textit{choose}
to record only an event that violates the conditions of
$\mathcal{L}_{patt}$. In this case the $pass$ variable will become
false in some state.

\begin{theorem}\label{thm:properties}
Let $\mathcal{L}_{M}$ represent the admissible traces of a transition system containing the constraints of Figure~\ref{fig:lustre-prop}. The transition system admits only traces in $\mathcal{L}_{patt}$ if and only if the property $pass$ is invariant. Formally: $(\mathcal{L}_{M} \subseteq \mathcal{L}_{patt}) \Leftrightarrow (\forall \sigma, \tau, i : (\sigma, \tau) \in \mathcal{L}_{M} \Rightarrow \sigma_i \models pass)$
\end{theorem}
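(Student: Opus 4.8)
The plan is to prove the biconditional by establishing each implication in contrapositive form, leaning on the observation (already argued for Figure~\ref{fig:lustre-prop}) that all four constraints restrict only the fresh variables $run$, $timer$, $rec_c$, $pass$: hence for every admissible trace of the underlying transition system and every assignment to the non-deterministic input $rec_c$ respecting constraint~3 ($rec_c \Rightarrow c$) there is a corresponding trace in $\mathcal{L}_M$, and conversely every trace of $\mathcal{L}_M$ projects to an admissible trace of the underlying system with the same $c$- and $e$-behaviour. The first thing I would prove is a closed form for $timer$ along any trace of $\mathcal{L}_M$: if $k$ is a \emph{switch-on} index for $run$ — meaning $run$ is true throughout $[k,m]$ and either $k=1$ or $run$ is false at $k-1$ — then $\sigma_k \models c$ and $timer_{p} = \tau_{p} - \tau_k$ for all $p \in [k,m]$. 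This comes straight out of unfolding constraints~1 and~2: $run_k$ can be true only because $rec_{c,k}$ holds, so constraint~3 gives $\sigma_k \models c$; $timer_k = 0$ because the $ite$ of constraint~2 takes its else-branch when $pre(run)$ is false (or $k$ is the initial state); and a one-line induction using $t - pre(t) = \tau_{p+1}-\tau_{p}$ and the fact that $run$ stays true yields the closed form.

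For the forward implication, assume $\mathcal{L}_M \subseteq \mathcal{L}_{patt}$ and suppose, for contradiction, that some trace of $\mathcal{L}_M$ has $timer_n > h$ for some $n$ (so $\sigma_n \not\models pass$). Since $timer_1 = 0 \le h$ we have $n \ge 2$, and since constraint~2 forces $timer$ to $0$ at any step where $pre(run)$ is false, $run_{n-1}$ must be true; let $k \le n-1$ be the least index with $run$ true throughout $[k,n-1]$, which is then a switch-on index. The lemma gives $\sigma_k \models c$ and $timer_n = \tau_n - \tau_k$, so $\tau_n > \tau_k + h$. Applying membership in $\mathcal{L}_{patt}$ at index $k$ yields $j > k$ with $\tau_k + l \le \tau_j \le \tau_k + h$ and $\sigma_j \models e$; since $\tau$ is strictly increasing and $\tau_j \le \tau_k + h < \tau_n$, we get $k < j < n$. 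Then $run_{j-1}$ is true and $timer_j = \tau_j - \tau_k \in [l,h]$, so the guard $pre(run)\wedge e \wedge l \le timer \le h$ of constraint~1 holds at step $j$, forcing $run_j = false$ and contradicting that $run$ is true on $[k,n-1]$. Hence $pass$ is invariant.

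For the reverse implication I would argue the contrapositive: suppose some admissible trace $(\sigma,\tau)$ of the underlying system violates $\mathcal{L}_{patt}$, so there is an index $i$ with $\sigma_i \models c$ for which no $j > i$ satisfies $\tau_i + l \le \tau_j \le \tau_i + h$ and $\sigma_j \models e$. Build a trace of $\mathcal{L}_M$ over this underlying trace by setting $rec_c$ true exactly at step $i$ (permitted since $\sigma_i \models c$) and false everywhere else. Then $run$ is false before $i$, switches on at $i$, and the reset guard of constraint~1 can never fire after $i$ — firing would require precisely a forbidden $j$ — and $rec_c$ never fires again, so $run$ stays true forever after $i$; by the lemma $timer_m = \tau_m - \tau_i$ for every $m \ge i$. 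The non-Zeno hypothesis makes $\tau_m$ grow without bound, so $timer_m > h$ for some $m$, at which point $\sigma_m \not\models pass$. Hence $pass$ is not invariant, which establishes the contrapositive and completes the proof.

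The step I expect to be the main obstacle is getting the bookkeeping around constraint~1 exactly right: confirming that the $e$-in-window guard is the only way $run$ can be driven false; that recording an additional $c$ while $run$ is already true neither resets $timer$ nor otherwise interferes, because constraint~2's update keys on $pre(run)$ rather than on $rec_c$; and that the initial-state cases of the two $\rightarrow$ operators (in particular $run_1 = rec_{c,1}$ and $timer_1 = 0$) are consistent with the switch-on lemma. Once those details are pinned down, the remainder is routine unfolding of the Lustre semantics of Section~\ref{sec:definitions} together with arithmetic over the strictly increasing, non-Zeno time sequence $\tau$.
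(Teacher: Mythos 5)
Your proposal is correct and follows essentially the same route as the paper's proof: one direction shows that a $timer$ overflow forces a recorded $c$ whose $\mathcal{L}_{patt}$-witness $e$ would have reset $run$ (the paper's Lemma~\ref{proplem2}), and the other constructs/selects the trace in which $rec_c$ records only the violating occurrence of $c$ so that $run$ stays true and non-Zenoness drives $timer$ past $h$ (the paper's Lemma~\ref{proplem1}). Your explicit ``switch-on'' closed form for $timer$ and the freshness-of-observer-variables argument just make precise what the paper handles with a ``without loss of generality.''
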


\begin{proof}
First we show that if $pass$ is invariant for a trace of the transition relation, then that trace is in $\mathcal{L}_{patt}$.

\begin{lemma}\label{proplem1}
 $(\forall \sigma, \tau, i : (\sigma, \tau) \in \mathcal{L}_{M} \Rightarrow \sigma_i \models pass) \Rightarrow (\mathcal{L}_{M} \subseteq \mathcal{L}_{patt})$.
\end{lemma}
\begin{proof}
Towards contradiction, assume $\mathcal{L}_{M} \not\subseteq
\mathcal{L}_{patt}$. Let $(\sigma, \tau)$ be a trace in
$\mathcal{L}_M$ but not in $\mathcal{L}_{patt}$. Since $(\sigma, \tau)
\notin \mathcal{L}_{patt}$, by definition there exists $i$ such that
$\sigma_i \models c$ and
\begin{equation}\label{fml:not-e}
\forall j : (j > i) \wedge \tau_i+l \leq
\tau_j \leq \tau_i+h \Rightarrow \sigma_j \not\models e.
\end{equation}
Without loss of generality, we can assume that this is the only time
when $c$ is recorded. That is, $\sigma_i \models rec_{c}$ and $\forall
k : k \neq i \Rightarrow \sigma_k \not\models rec_{c}$. From
constraint 1 in Figure~\ref{fig:lustre-prop} we have
\begin{equation*}
\forall j : ((j < i) \Rightarrow \sigma_j \not\models run)
\wedge ((\tau_i \leq \tau_j < \tau_i+l) \Rightarrow \sigma_j \models run)
\end{equation*}
This can actually be strengthened more. From Formula~\ref{fml:not-e}
the event $e$ does not occur between $\tau_i + l$ and $\tau_i + h$. So
the variable $run$ will become invariant after $\tau_i$.
\begin{equation*}
\forall j : ((j < i) \Rightarrow \sigma_j \not\models run) \wedge
(\tau_i \leq \tau_j) \Rightarrow \sigma_j \models run)
\end{equation*}
From this and constraint 2 in Figure~\ref{fig:lustre-prop}, we have
\begin{equation*}
\forall j : (j \leq i) \Rightarrow \sigma_j \models timer = 0
\end{equation*}
and
\begin{equation*}
\forall j : (\tau_i < \tau_j) \Rightarrow (\sigma_j \models timer =
(pre(timer) + (\tau_j - \tau_{j-1})))
\end{equation*}
From this and the invariant $\forall i : \tau_{i+1} > \tau_i$, we have
\begin{equation*}
\begin{split}
&\forall j : (\tau_i < \tau_j) \Rightarrow (\sigma_j \models timer > pre(timer))
\end{split}
\end{equation*}
Therefore since the value of timer is zero before $\tau_i$ and always
increasing after $\tau_i$, and since we only consider non-Zeno traces
($\forall v \exists i : v < \tau_i$), eventually $timer > h$ and so
$pass$ becomes false. This contradicts the assumption $(\forall
\sigma, \tau, i : (\sigma, \tau) \in \mathcal{L}_{M} \Rightarrow
\sigma_i \models pass)$. Therefore $\mathcal{L}_{M} \subseteq
\mathcal{L}_{patt}$. \qed
\end{proof}

Next we show if a trace of $\mathcal{L}_{M}$ is in
$\mathcal{L}_{patt}$, then $pass$ is invariant for this trace.

\begin{lemma}\label{proplem2}
 $(\mathcal{L}_{M} \subseteq \mathcal{L}_{patt}) \Rightarrow (\forall
  \sigma, \tau, i : (\sigma, \tau) \in \mathcal{L}_{M} \Rightarrow
  \sigma_i \models pass)$
\end{lemma}

\begin{proof}
Towards contradiction, assume that there exists a trace of
$\mathcal{L}_{M}$ for which $pass$ is not invariant. This means that
for some state $\sigma_j \models timer > h$. For this to be true, the
timer must be running continuously since it started with some recorded
occurrence of $c$. That is there exists $i$ such that $\sigma_i\models
timer = 0$, $\sigma_i\models rec_c$, $\sigma_i\models c$, $\forall k :
i \leq k < j \Rightarrow \sigma_k\models run$, and $\tau_j - \tau_i
> h$. Thus $\forall k : i \leq k \leq j \Rightarrow \sigma_k\models
timer = \tau_k - \tau_i$. By the definition of $\mathcal{L}_{patt}$ we
have a $k$ such that $\tau_i + l \leq \tau_k \leq \tau_i + h$ and
$\sigma_k \models e$. This means $l \leq \tau_k - \tau_i \leq h$ and
so $\sigma_k\models l \leq timer \leq h$. Therefore $\sigma_k
\not\models run$. We also have $\tau_k \leq \tau_i + h < \tau_j$ so
that $k < j$. Thus from $\forall k : i \leq k < j \Rightarrow
\sigma_k\models run$ we have $\sigma_k\models run$ which is a
contradiction. Therefore, $pass$ is invariant. \qed
\end{proof}

From Lemmas~\ref{proplem1} and~\ref{proplem2} we have $(\mathcal{L}_{M}
\subseteq \mathcal{L}_{patt}) \Leftrightarrow (\forall \sigma, \tau, i
: (\sigma, \tau) \in \mathcal{L}_{M} \Rightarrow \sigma_i \models
pass)$. \qed
\end{proof}


\subsection{Implementing RSL Patterns as Lustre Constraints}

As we demonstrated with Figure~\ref{fig:lustre-prop}, one can specify
a Lustre property that verifies whether or not some transition system
only admits traces of $\mathcal{L}_{patt}$. However, it is
surprisingly non-trivial to actually implement a transition system
that admits \textit{exactly} the traces of $\mathcal{L}_{patt}$.
Naively, one could attempt to add the constraints of
Figure~\ref{fig:lustre-prop} to a transition system and then assert
that $pass$ is invariant. However, this transition system will admit
all traces where every occurrence of $c$ is never recorded ($\forall
\sigma_i : \sigma_i \not\models rec_{c}$). Clearly some of these
traces would not be in $\mathcal{L}_{patt}$.

We conjecture that given the Lustre expression language described in
Section~\ref{sec:definitions} it is not possible to model a transition
system that admits only and all of the traces of $\mathcal{L}_{patt}$.
The intuition behind this claim is that Lustre specifications contain
a fixed number of state variables, and variables have non-recursive
types. Thus a Lustre specification only has a finite amount of memory
(though it can, for example, have arbitrary sized integers). If a
Lustre specification has $n$ variables we can always consider a trace
in $\mathcal{L}_{patt}$ where event $c$ occurs more than $n$ times in
a tiny interval. In order for the pattern to hold true, the Lustre
specification must constrain itself so that at least one occurrence of
$e$ occurs precisely between $t_{c}+l$ and $t_{c}+h$ after each event
$c$. This requires ``more memory'' than the Lustre specification has
available.

Rather than model the exact semantics of this pattern, we choose to
take a more pragmatic approach. We model a strengthened version of
Figure~\ref{fig:p1} which does not allow overlapping instances of the
pattern. That is, after an event $c$ there can be no more occurrences
of $c$ until the corresponding occurrence of $e$. We do this by proving
that $c$ cannot occur frequently enough to cause an overlapping
occurrence of the pattern. Then if we constrain the system based on a
simple non-overlapping check of the pattern, the resulting system is
the same as if we had constrained it using the full pattern. This
simple non-overlapping check and the property limiting the frequency
of $c$ are both easily expressed in Lustre since they only look back
at the most recent occurrence of $c$. Moreover, they can both be used
freely in positive and negative contexts. Formally, the property we
prove is $\mathcal{L}_{prop}$ and the constraints we make are
$\mathcal{L}_{cons}$:
\begin{equation*}
\begin{split}
\mathcal{L}_{prop} &= \{(\sigma,\tau)~|~\forall i :
  \sigma_i \models c \Rightarrow \forall j : (j > i) \wedge
  (\tau_j \leq \tau_i + h) \wedge \sigma_j \models c
  \Rightarrow\\
  &\quad \exists k \in (i, j] : \tau_i + l \leq \tau_k \wedge \sigma_k \models e\}
\end{split}
\end{equation*}
\begin{equation*}
\begin{split}
\mathcal{L}_{cons} &= \{(\sigma,\tau)~|~\forall i:
  \sigma_i \models c \Rightarrow \exists j : (j > i)\wedge~\\
  &\quad [(\tau_i + l \leq \tau_j \leq \tau_i + h \wedge \sigma_j \models e) \vee (\tau_j \leq \tau_i + h \wedge \sigma_j \models c)] \}
\end{split}
\end{equation*}
The correctness of $\mathcal{L}_{prop}$ and $\mathcal{L}_{cons}$ are
captured by the following theorem.

\begin{theorem}\label{theorem2}
Let $M$ be a transition system and $\mathcal{L}_{M}$ its corresponding
set of admissible traces. Suppose $\mathcal{L}_M \subseteq
\mathcal{L}_{prop}$. Then $\mathcal{L}_{cons}$ and
$\mathcal{L}_{patt}$ are equivalent restrictions on $\mathcal{L}_M$,
that is $\mathcal{L}_{M} \cap \mathcal{L}_{cons} = \mathcal{L}_{M}
\cap \mathcal{L}_{patt}$.
\end{theorem}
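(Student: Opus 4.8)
The plan is to prove the set equality $\mathcal{L}_M \cap \mathcal{L}_{cons} = \mathcal{L}_M \cap \mathcal{L}_{patt}$ by establishing both inclusions, using throughout the standing hypothesis $\mathcal{L}_M \subseteq \mathcal{L}_{prop}$. Fix an arbitrary trace $(\sigma,\tau) \in \mathcal{L}_M$; since it lies in $\mathcal{L}_{prop}$, for every $i$ with $\sigma_i \models c$ and every later $j \leq$ (time) $\tau_i + h$ at which $c$ recurs, there is an intervening occurrence of $e$ at some $k \in (i,j]$ with $\tau_k \geq \tau_i + l$. This ``no-overlap without a witness'' fact is the engine of both directions.

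For the inclusion $\mathcal{L}_M \cap \mathcal{L}_{patt} \subseteq \mathcal{L}_M \cap \mathcal{L}_{cons}$: take $(\sigma,\tau) \in \mathcal{L}_M \cap \mathcal{L}_{patt}$ and a position $i$ with $\sigma_i \models c$. The definition of $\mathcal{L}_{patt}$ directly gives a $j > i$ with $\tau_i + l \leq \tau_j \leq \tau_i + h$ and $\sigma_j \models e$, which satisfies the first disjunct in the definition of $\mathcal{L}_{cons}$. So this direction is essentially immediate and does not even need $\mathcal{L}_{prop}$.

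For the reverse inclusion $\mathcal{L}_M \cap \mathcal{L}_{cons} \subseteq \mathcal{L}_M \cap \mathcal{L}_{patt}$: take $(\sigma,\tau) \in \mathcal{L}_M \cap \mathcal{L}_{cons}$ and a position $i$ with $\sigma_i \models c$; I must produce a witness $j > i$ with $\tau_i + l \leq \tau_j \leq \tau_i + h$ and $\sigma_j \models e$. From $\mathcal{L}_{cons}$ I get some $j > i$ satisfying one of the two disjuncts. If the first disjunct holds ($e$ in the window) I am done. The work is in the second disjunct: $c$ recurs at $j$ with $\tau_j \leq \tau_i + h$, and no $e$-witness is yet guaranteed. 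Here I invoke $\mathcal{L}_{prop}$ at the pair $(i,j)$: it yields $k \in (i,j]$ with $\tau_k \geq \tau_i + l$ and $\sigma_k \models e$; and since $\tau_k \leq \tau_j \leq \tau_i + h$, this $k$ is the desired witness for $\mathcal{L}_{patt}$ at $i$ (note $k > i$). The one subtlety to handle carefully is that the recurrence of $c$ invoked from $\mathcal{L}_{cons}$ could itself land at time $\tau_j < \tau_i + l$, i.e.\ strictly before the window even opens — so the second-disjunct case must be iterated, or better, one should choose $j$ minimal among the $c$-recurrences with $\tau_j \leq \tau_i + h$ for which the first disjunct fails. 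I expect this to be the main obstacle: showing the chain of $c$-recurrences produced by repeated appeals to $\mathcal{L}_{cons}$ cannot continue forever without eventually triggering the first disjunct (an actual $e$ in $[\tau_i+l,\tau_i+h]$). This follows because the trace is non-Zeno and each recurrence of $c$ that falls before $\tau_i + l$ still has $\tau_i + h$ as a fixed time bound, so only finitely many such $c$-events fit in $[\tau_i, \tau_i + h]$; the last one in that interval, fed to $\mathcal{L}_{prop}$ together with $i$, must deliver an $e$ at time $\geq \tau_i + l$ and $\leq \tau_i + h$. Once the witness is secured, $\sigma$ meets the $\mathcal{L}_{patt}$ condition at $i$, and since $i$ was arbitrary, $(\sigma,\tau) \in \mathcal{L}_{patt}$, completing the proof.
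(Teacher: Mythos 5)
Your argument is correct and, up to framing (you argue the second inclusion directly, the paper argues it by contradiction), it is essentially the paper's proof: the inclusion $\mathcal{L}_{M}\cap\mathcal{L}_{patt}\subseteq\mathcal{L}_{M}\cap\mathcal{L}_{cons}$ is immediate from $\mathcal{L}_{patt}\subseteq\mathcal{L}_{cons}$ (no use of $\mathcal{L}_{prop}$), and for the converse one takes the $j$ supplied by $\mathcal{L}_{cons}$ and, in the second-disjunct case, feeds the pair $(i,j)$ to $\mathcal{L}_{prop}$ to extract $k\in(i,j]$ with $\tau_i+l\leq\tau_k\leq\tau_j\leq\tau_i+h$ and $\sigma_k\models e$. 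The ``main obstacle'' you flag at the end, however, does not exist: the hypothesis of $\mathcal{L}_{prop}$ requires only $j>i$, $\tau_j\leq\tau_i+h$, and $\sigma_j\models c$ --- it does not require $\tau_j\geq\tau_i+l$ --- so your single application already delivers the needed witness even in the case you worry about. Indeed that case is vacuous: the $k$ produced satisfies $\tau_i+l\leq\tau_k\leq\tau_j$, so a $c$-recurrence strictly before time $\tau_i+l$ cannot occur in any trace of $\mathcal{L}_{prop}$. No iteration, no minimal choice of $j$, and no appeal to non-Zenoness is needed here; the paper's Lemma~4 is exactly your one-step argument.
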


\begin{proof}
We prove the theorem by showing that the subset relationship between
$\mathcal{L}_{M} \cap \mathcal{L}_{cons}$ and $\mathcal{L}_{M} \cap
\mathcal{L}_{patt}$ holds in both directions.

\begin{lemma}\label{lemma3}
$\mathcal{L}_{M} \cap \mathcal{L}_{patt} \subseteq \mathcal{L}_{M}
\cap \mathcal{L}_{cons}$
\end{lemma}

\begin{proof}
From the definitions of $\mathcal{L}_{patt}$ and $\mathcal{L}_{cons}$
it follows directly that $\mathcal{L}_{patt} \subseteq
\mathcal{L}_{cons}$. Therefore $\mathcal{L}_{M} \cap
\mathcal{L}_{patt} \subseteq \mathcal{L}_{M} \cap \mathcal{L}_{cons}$.
\qed
\end{proof}

\begin{lemma}\label{lemma4}
Suppose $\mathcal{L}_{M} \subseteq \mathcal{L}_{prop}$, then
$\mathcal{L}_{M} \cap \mathcal{L}_{cons} \subseteq \mathcal{L}_{M}
\cap \mathcal{L}_{patt}$
\end{lemma}

\begin{proof}
Suppose towards contradiction that $\mathcal{L}_{M} \cap
\mathcal{L}_{cons} \not\subseteq \mathcal{L}_{M} \cap
\mathcal{L}_{patt}$. Consider a trace $(\sigma, \tau) \in
\mathcal{L}_{M} \cap \mathcal{L}_{cons}$ with $(\sigma, \tau) \notin
\mathcal{L}_M \cap \mathcal{L}_{patt}$. Then we have $(\sigma, \tau)
\in \mathcal{L}_{cons}$, $(\sigma, \tau) \in \mathcal{L}_{prop}$, and
$(\sigma, \tau) \notin \mathcal{L}_{patt}$. From the definition of
$\mathcal{L}_{patt}$ we have an $i$ such that $\sigma_i\models c$ and
\begin{equation}\label{fml:patt-not-e}
\forall j : (j > i) \wedge (\tau_i + l \leq \tau_j \leq \tau_i + h)
\Rightarrow \sigma_j\not\models e.
\end{equation}
Then from the definition of $\mathcal{L}_{cons}$ with $\sigma_i\models
c$ we have a $j$ such that $j > i$ and either $(\tau_i + l \leq \tau_j
\leq \tau_i + h \wedge \sigma_j \models e)$ or $(\tau_j \leq \tau_i +
h \wedge \sigma_j \models c)$. The former option directly contradicts
Formula~\ref{fml:patt-not-e}, so we must have $\tau_j \leq \tau_i + h$
and $\sigma_j \models c$. From the definition of $\mathcal{L}_{prop}$
with $\sigma_i\models c$ and our $j$, we have a $k$ in $(i, j]$ such
  that $\tau_i + l \leq \tau_k$ and $\sigma_k\models e$. From $k \leq
  j$ we have $\tau_k \leq \tau_j$ and thus $\tau_i + l \leq \tau_k
  \leq \tau_i + h$. Instantiating Formula~\ref{fml:patt-not-e} with
  $k$ yields $\sigma_k\not\models e$, a contradiction. Therefore
  $\mathcal{L}_{M} \cap \mathcal{L}_{cons} \subseteq
  \mathcal{L}_{M} \cap \mathcal{L}_{patt}$. \qed
\end{proof}

From Lemmas~\ref{lemma3} and~\ref{lemma4} have $\mathcal{L}_{M} \cap
\mathcal{L}_{cons} = \mathcal{L}_{M} \cap \mathcal{L}_{patt}$. \qed
\end{proof}




\begin{example}\label{ex:main}
Suppose we want to model a system of components communicating on a
shared bus. The transition relation for this system must contain
constraints that dictate when threads can start and stop and how
frequently new messages may arrive. First we constrain the event
$new\_message$ from occurring too frequently according to the pattern
instance in Figure~\ref{fig:sporadic}. Let $\mathcal{L}_{nm}$
represent the set of admissible traces for this pattern. This set is
defined explicitly in Formula~\ref{fml:new-message}.
\begin{equation*}\label{fml:new-message}
\begin{split}
\mathcal{L}_{nm} &= \{(\sigma,\tau)~|~\forall i : \sigma_i \models new\_message \Rightarrow \\
  & \neg [\exists j : (j > i) \wedge (\tau_j < \tau_i + 50) \wedge (\sigma_j \models new\_message)]\}
\end{split}
\end{equation*}

Suppose we wish to constrain the system to the pattern instances in
Figure~\ref{fig:event-pattern}. The first pattern instance is
represented by the set $\mathcal{L}_{start}$ and the second by 
$\mathcal{L}_{stop}$:
\begin{equation*}\label{fml:start}
\mathcal{L}_{start} = \{(\sigma,\tau)~|~\forall i : \sigma_i \models new\_message \Rightarrow \sigma_i \models thread\_start\}
\end{equation*}
\begin{equation*}\label{fml:stop}
\begin{split}
\mathcal{L}_{stop} &= \{(\sigma,\tau)~|~\forall i \exists j : \sigma_i \models thread\_start \Rightarrow \\
& (j > i) \wedge (\tau_i + l \leq \tau_j \leq \tau_i + h) \wedge (\sigma_j \models thread\_stop)\}
\end{split}
\end{equation*}

Let $\mathcal{L}_{M}$ denote the admissible traces of the transition
system that is being modeled. The goal is to specify the transition
system in Lustre such that $\mathcal{L}_{M} = \mathcal{L}_{nm} \cap
\mathcal{L}_{start} \cap \mathcal{L}_{stop}$. Writing a Lustre
constraint to represent the set of traces $\mathcal{L}_{start}$ is
trivial. The traces that are contained in $\mathcal{L}_{start}$ are
those whose states all satisfy the expression $new\_message =
thread\_stop$. However, as we noted earlier, it is not possible to
develop a set of Lustre constraints that admit only (and all of) the
traces of $\mathcal{L}_{stop}$.

Note that the second pattern in Figure~\ref{fig:event-pattern} is an instance of the pattern described in Figure~\ref{fig:p1}. Therefore we can split the set $\mathcal{L}_{stop}$ into two sets, $\mathcal{L}_{stopc}$ and $\mathcal{L}_{stopp}$:
\begin{equation*}\label{fml:stopc}
\begin{split}
\mathcal{L}_{stopc} &= \{(\sigma,\tau)~|~\forall i:
  \sigma_i \models thread\_start \Rightarrow \exists j : (j > i)\wedge~\\
  &\quad [(\tau_i + l \leq \tau_j \leq \tau_i + h \wedge \sigma_j \models thread\_stop) \vee~ \\
  &\quad (\tau_j \leq \tau_i + h \wedge \sigma_j \models thread\_start)] \}
\end{split}
\end{equation*}
\begin{equation*}\label{fml:stopp}
\begin{split}
\mathcal{L}_{stopp} &= \{(\sigma,\tau)~|~\forall i :
  \sigma_i \models thread\_start \Rightarrow \forall j : (j > i) \wedge~\\
  &\quad (\tau_j \leq \tau_i + h) \wedge \sigma_j \models thread\_start \Rightarrow~ \\
  &\quad \exists k \in (i , j] : \tau_i + l \leq \tau_k \wedge \sigma_k \models thread\_stop\}
\end{split}
\end{equation*}

In this example, the sets of admissible traces representing the patterns happen to have the following relationship:

\begin{equation}\label{exampleeq}
\mathcal{L}_{nm} \cap \mathcal{L}_{start} \subseteq \mathcal{L}_{stopp}
\end{equation}

This is because for every trace in $\mathcal{L}_{nm}$ the event
$new\_message$ only occurs at most every 50ms. Likewise, for each
state of every trace of $\mathcal{L}_{start}$ the variable
$thread\_start$ is true if and only if $new\_message$ is true.
Finally, the set $\mathcal{L}_{stopp}$ contains every trace where
$thread\_start$ occurs at most every 20ms. From
Formula~\ref{exampleeq} and Theorem~\ref{theorem2} we have
$\mathcal{L}_{nm} \cap \mathcal{L}_{start} \cap \mathcal{L}_{stopc} =
\mathcal{L}_{nm} \cap \mathcal{L}_{start} \cap \mathcal{L}_{stop}$.
Thus the system $\mathcal{L}_{nm} \cap \mathcal{L}_{start} \cap
\mathcal{L}_{stopc}$, which we can model in Lustre, is equivalent to a system constrained by
the pattern instances in Figures~\ref{fig:sporadic} and~\ref{fig:event-pattern}.
\end{example}

Example~\ref{ex:main} is meant to demonstrate that, in practical
systems, there is usually some constraint on how frequently events
outside the system may occur. Systems described by the functional RSL
patterns generally have some limitations on how many events they can
respond to within a finite amount of time. The Lustre implementations
of $\mathcal{L}_{cons}$ and $\mathcal{L}_{prop}$ are simpler than
Figure~\ref{fig:lustre-prop}, and their proof of correctness is also
simpler then Theorem~\ref{thm:properties}, though we omit both due to
space limitations.

\section{Application}

We implemented a number of RSL patterns into the AGREE tool. These
patterns were used to reason about the behavior of a real-world
avionics system. Specifically, the patterns were used to model the
logic and scheduling constraints of threads running on a real-time
operating system on an embedded computer on an air vehicle. Each
thread in the system has a single entry point that is dispatched by
some sort of event. The event may be the arrival of data from a bus or
a signal from another thread. When a thread receives an event, the
current state of the thread's inputs are latched. Each thread runs at
the same priority as every other thread (no thread may preempt any
other thread). A thread begins executing after it receives an event
and no other thread is executing.

The patterns in Figures~\ref{fig:sporadic} and~\ref{fig:event-pattern} are actually fairly representative of the constraints used in this model. Figure~\ref{fig:requirements} shows some of the RSL patterns that were used to describe these scheduling constraints. We added an additional tag ``exclusively'' before the second event in the patterns to indicate that the second event occurs only in the specified interval after the first pattern (and never any other time). We found that this was a useful shorthand because one often wants to specify a signal that only occurs under a specified condition and not at any other time.

\begin{figure}
\centering
\fbox{
\begin{minipage}{.88\textwidth}
\textbf{assert} \textit{``thread A runtime''} \textbf{ : whenever} \textit{thread\_A\_start\_running} \textbf{occurs} \\ \phantom{xxx} \textit{thread\_A\_finish}
\textbf{exclusively occurs during [}10.0\textbf{,} 50.0\textbf{];}\\

\textbf{assert} \textit{``thread B runtime''} \textbf{ : whenever} \textit{thread\_B\_start\_running} \textbf{occurs} \\ \phantom{xxx} \textit{thread\_B\_finish}
\textbf{exclusively occurs during [}10.0\textbf{,} 50.0\textbf{];} \\

\textbf{assert} \textit{``thread C runtime''} \textbf{ : whenever} \textit{thread\_C\_start\_running} \textbf{occurs} \\ \phantom{xxx} \textit{thread\_C\_finish}
\textbf{exclusively occurs during [}10.0\textbf{,} 50.0\textbf{];}
\end{minipage}
}
\caption{Assertions about the how the operating system schedules threads}
\label{fig:requirements}
\end{figure}

The results that each thread produces after it finishes executing are
described by an assume-guarantee contract. Generally speaking, the
assumptions restrict the values of inputs that the thread expects to
see. Likewise, the thread's guarantees constrain the values of the
thread's outputs based on it's current state and input values. The
AADL component that contains the threads has assumptions about how
frequently it receives inputs and has guarantees about how quickly it
produces outputs. These assumptions are translated to constraints in
the Lustre transition system, and the guarantees are translated to
properties. Figure~\ref{fig:contract} illustrates one of these
assumptions and guarantees.

The ``eq'' statements in Figure~\ref{fig:contract} are used to
constrain a variable to an expression. They are usually used as a
convenient short hand to make AGREE contracts easier to read. In this
case, the first ``eq'' statement is used to set the variable
$change\_status\_request$ to true if and only if a new message has
arrived and the content of the message is requesting that the vehicle
change its status. Likewise, the second statement is used to record
the last requested change value into the $change\_request$ variable.
The contract assumes that this new message arrives periodically (with
some jitter). The contract guarantees that if a new message arrives
requesting that the vehicle change its status, then the vehicle's
status will be set to the requested value within 500ms. In this
application we assumed that all time units are expressed in
microseconds. This means that the timing constraints expressed in
Figure~\ref{fig:requirements} are also expressed in microseconds.
Other constraints are used to assert that the $vehicle\_status$
variable corresponds to one of the state variables in the component's
threads.

\begin{figure}
\centering
\fbox{
\begin{minipage}{.88\textwidth}
\textbf{eq} change\_status\_event \textbf{: bool =} \\ 
\phantom{xxx} new\_message \textbf{and} message\_content.change\_vehicle\_status\textbf{;} \\

\textbf{eq} change\_request \textbf{: bool =} \\ 
 \phantom{xxx} \textbf{ite(}change\_status\_event\textbf{,} \\
 \phantom{xxx} \phantom{xxx} message\_content.status\textbf{,} \\
 \phantom{xxx} \phantom{xxx} \textbf{false $\rightarrow$ pre(}change\_request\textbf{));} \\

\textbf{assume} \textit{``periodic messages''} \textbf{:} new\_message \textbf{occurs} \\
\phantom{xxx} \textbf{each} 10000.0 \textbf{with jitter} 50.0\textbf{;}\\

\textbf{guarantee} \textit{``new message can change vehicle status''} \textbf{:} \\ 
\phantom{xxx} \textbf{whenever} change\_status\_event \textbf{occurs} \\ 
\phantom{xxx} \phantom{xxx} vehicle\_status = change\_request \textbf{during [}0.0\textbf{,} 500.0\textbf{];}

\end{minipage}
}
\caption{Assumptions and guarantees about the component containing the threads.}
\label{fig:contract}
\end{figure}

The guarantee of this component is invariant if and only if the threads in the component's implementation are scheduled in such a way that whenever a new message arrives its content is parsed and sent to the correct threads to be processed in a timely manner.  The logic expressed in the contract of each thread determines how the content of this message is transmitted to other threads in the system. 

\subsection{Results}

We had three properties of interest for the vehicle. These properties were related to timing, schedulability, and behavior of the system's threads.  We ran the translated Lustre file, which contained about 1000 lines, from the AADL/AGREE model on the latest version of JKind on a Linux machine with an Intel(R) Xeon(R) E5-1650 CPU running at 3.50GHz.  JKind uses k-induction, property directed reachability, and invariant generation engines to prove properties of Lustre models. In the case of this experiment, it took about 8 hours to prove all three properties. One of the properties was proved via k-induction, the other two were proved by the property directed reachability engine.

JKind allows users to export the lemmas used to prove a property. These lemmas can be exported and used again in order to speed up solving for similar models and properties. We found that when these lemmas were used again to prove the properties a second time all of the properties were proved in less than 10 seconds.  This seems to indicate that the properties are not particularly \textit{deep}. That is to say, to prove the properties via k-induction, the inductive step does not need to unroll over many steps.  We are currently exploring techniques for lemma discovery for properties specified with RSL patterns.

\section{Related Work}


Our work focuses on the real-time patterns in the Requirements Specification Language (RSL)~\cite{Reinkemeir12} that was created as part of the CESAR project~\cite{CESAR}.  This language was an extension and modularization of the Contract Specification Language (CSL)~\cite{speeds:csl}.  The goal of both of these projects was to provide contract-based reasoning for complex embedded systems.  We chose this as our initial pattern language because of the similarity in the contract reasoning approach used by our AGREE tool suite~\cite{CoferNFM2012}.

There is considerable work on real-time specification patterns for different temporal logics.  Konrad and Cheng~\cite{konrad2005real} provide the first systematic study of real-time specification patterns, adapting and extending the patterns of Dwyer~\cite{dwyer1999patterns} for three different temporal logics: TCTL~\cite{alur1991techniques}, MTL~\cite{koymans1990specifying}, and RTGIL~\cite{moser1997graphical}.  Independently, Gruhn~\cite{gruhn2006patterns} constructed a real-time pattern language derived from Dwyer, presenting the patterns as observers in timed automata.  In Konrad and Cheng, multiple (and overlapping) occurrences of patterns are defined in a trace, whereas in Gruhn, only the first occurrence of the pattern considered.  This choice sidesteps the question of adequacy for overlapping triggering events (as discussed in Section~\ref{sec:patterns}), but limits the expressiveness of the specification.  We use a weaker specification language than Konrad~\cite{konrad2005real} which allows better scaling to our analysis, but we also consider multiple occurrences of patterns, unlike Gruhn~\cite{gruhn2006patterns}.  Bellini~\cite{bellini2009expressing} creates a classification scheme for both Gruhn's and Konrad's patterns and provides a rich temporal language called TILCO-X that allows more straightforward expression of many of the real-time patterns.  Like~\cite{konrad2005real}, this work considers multiple overlapping occurrences of trigger events.


The closest work to ours is probably that of Abid et. al~\cite{abid2012real}, who encode a subset of the CSL patterns as observers in a timed extension of Petri nets called TTS, and supplement the observers with properties that involve both safety and liveness in LTL.  For most of the RSL patterns considered, the patterns are only required to hold for the first triggering event, rather than globally across the input trace.  In addition, the use of full LTL makes the analysis more difficult with inductive model checkers.  Other recent work~\cite{etzien2013contracts} considers very expressive real-time contracts with quantification for systems of systems.  This quantification makes the language expressive, but difficult to analyze.

Other researchers including Pike~\cite{Pike2007} and Sorea~\cite{sorea2008modeling} have explored the idea of restricting traces to disallow overlapping events in order to reason about real-time systems using safety properties. The authors of~\cite{Li15} independently developed a similar technique of using a \textit{trigger} variable to specify real-time properties that quantify over events.

\section{Conclusion}

We have presented a method for translating RSL patterns into Lustre
observers. While we only specifically discussed a single pattern in
detail, the techniques we presented can be applied analogously to
other functional or timing patterns. Similarly, the techniques we
presented can be applied to other synchronous data flow languages. The
RSL patterns have been incorporated into the AGREE plugin for the
OSATE AADL integrated development environment. We used these patterns
to show that we could successfully model, and prove properties about,
scheduling constraints for a real-world avionics application. Future
work will focus on lemma generation to improve scalability for
reasoning about real-time properties.



\bibliography{document}
\bibliographystyle{splncs}

\end{document}